\newtheorem{theorem}{\bf Theorem}
\newtheorem{definition}{\bf Definition}
\newtheorem{remark}{\bf Remark}
\newtheorem{assumption}{\bf Assumption}
\newtheorem{problem}{\bf Problem}
\title{\LARGE \bf
Distributed Safety-Critical MPC for Multi-Agent Formation Control  and Obstacle Avoidance}
\author{Chao Wang\textsuperscript{1}, Shuyuan Zhang\textsuperscript{2}, Lei Wang\textsuperscript{1}
\thanks{\textsuperscript{1}C. Wang and L. Wang are with the School of Automation Science and Electrical Engineering, Beihang University, Beijing, 100191, China 
        {\tt\small \{wchao, lwang\}@buaa.edu.cn}}
\thanks{\textsuperscript{2}S. Zhang is with the ICTEAM Institute, UCLouvain, 4 Avenue Georges Lema\^itre, 1348 Louvain-la-Neuve, Belgium
        {\tt\small shuyuan.zhang@uclouvain.be}}
}
\begin{document}

\maketitle
\thispagestyle{empty}
\pagestyle{empty}

\begin{abstract}
For nonlinear multi-agent systems with high relative degrees, achieving formation control and obstacle avoidance in a distributed manner remains a significant challenge. To address this issue, we propose a novel distributed safety-critical model predictive control (DSMPC) algorithm that incorporates discrete-time high-order control barrier functions (DHCBFs) to enforce safety constraints, alongside discrete-time control Lyapunov functions (DCLFs) to establish terminal constraints. To facilitate distributed implementation, we develop estimated neighbor states for formulating DHCBFs and DCLFs, while also devising a compatibility constraint to limit estimation errors and ensure convergence.  Additionally, we provide theoretical guarantees regarding the feasibility and stability of the proposed DSMPC algorithm based on a mild assumption. The effectiveness of the proposed method is evidenced by the simulation results, demonstrating improved performance and reduced computation time compared to existing approaches.
\end{abstract}

\section{INTRODUCTION}
\label{1}
Formation control and obstacle avoidance are fundamental capabilities in multi-agent systems (MASs), especially those with high relative degrees, such as quadrotor swarms and autonomous robotic convoys \cite{8798870}. These capabilities are essential for enabling coordinated behavior among agents while ensuring safety in complex environments.

Traditional methods like artificial potential fields \cite{9538804} and consensus-based approaches \cite{WU2020106332} are effective for formation control but lack predictive capability, making them less suitable for systems with high relative degrees and control constraints. Model predictive control (MPC) overcomes these limitations by explicitly incorporating system dynamics and constraints \cite{10384062, 2023centralized, VARGAS2022105054}. Centralized MPC frameworks \cite{2023centralized} have been applied to formation control and obstacle avoidance, while distributed MPC (DMPC) methods \cite{VARGAS2022105054} improve scalability. Despite their advantages, these approaches often rely on Euclidean distance constraints, which may be short-sighted in proactive obstacle avoidance.

Control barrier functions (CBFs) have been widely adopted to impose safety constraints by regulating approach speeds toward obstacles, thereby enabling more proactive avoidance strategies \cite{ 7782377, COHEN2024100947, 7857061, 7040372, 10167791}. In contrast to traditional Euclidean distance constraints, CBF-based methods ensure safety through real-time constraint satisfaction, often formulated within quadratic programming (QP) problems alongside nominal controllers or control Lyapunov functions (CLFs) to guarantee stability \cite{7782377, COHEN2024100947}. For MASs, CBFs have been extensively utilized to ensure safety while maintaining distributed coordination via QP-based approaches. Specifically, the study in \cite{7857061} developed real-time QP-based safety controllers for multi-robot systems to achieve collision-free operation, whereas \cite{9642050} optimized QP-based control signals to enforce safety constraints. Furthermore, the integration of CBFs with MPC enables long-horizon planning while preserving real-time safety guarantees. Research has demonstrated that MPC-based methods offer superior obstacle avoidance and stability performance compared to QP-based approaches \cite{9483029}. Additionally, it has been shown that appropriate CBF parameters can enhance MPC feasibility\cite{9683174}. However, these CBF-based methodologies are primarily designed for systems with a relative degree of one.

Recent studies have explored the application of high-order CBFs (HCBFs) to systems with high relative degrees \cite{9516971, 10886244, 10155975}. The concept of high-order CBFs was initially introduced in \cite{9516971}, with subsequent research extending it to robust formulations \cite{9868125}. By integrating high-order CBFs with MPC, improved control performance can be achieved for high-order systems. For instance, in \cite{10156532}, an iterative MPC framework utilizing DHCBFs was proposed for the safety-critical control of single-agent systems. However, the literature currently lacks comprehensive studies on the coordinated control of MASs with high relative degrees based on MPC and DHCBFs. For nonlinear MASs with high relative degrees, the challenge of achieving distributed implementation while ensuring recursive feasibility within the integrated MPC-CBF framework remains significant.
Motivated by these challenges, we aim to achieve formation control and obstacle avoidance for MASs with high relative degrees using a DMPC framework, while also ensuring the feasibility of the proposed approach. 

The main contributions of this paper are as follows: 
We propose a novel DSMPC algorithm for nonlinear MASs with high relative degrees. To formulate distributed safety and terminal constraints, DHCBFs and DCLFs are constructed using estimated future states of neighboring agents, obtained via a consensus-based controller. A compatibility constraint is introduced to limit estimation errors, ensuring constraints remain independent of control inputs of neighbors while enabling fully distributed optimization and guaranteeing convergence.
Unlike existing methods \cite{7857061, 10167791, 10156532}, which centrally handle MASs with relative degree one, our approach extends to nonlinear MASs with high relative degrees in a distributed manner, significantly enhancing computational efficiency. Moreover, under a mild assumption that the high-order time derivative of DHCBFs has a non-negative lower bound, we establish theoretical guarantees for recursive feasibility and stability. In contrast to MPC-CBF methods \cite{10167791, 10156532}, our approach explicitly ensures both feasibility and stability. 
Finally, the effectiveness of the proposed approach is validated through simulation results. Compared with existing methods, e.g., \cite{7857061, 2023centralized, 7040372}, our DSMPC algorithm achieves improved control performance with lower computational time.

The remainder of this paper is structured as follows. Section \ref{2} introduces key definitions and formulates the problem of interest. In Section \ref{3}, the DSMPC algorithm is presented, and the feasibility and stability are established. Simulation results are given in Section \ref{5}. Finally, Section \ref{6} concludes this paper.

This paper employs the following notations. The set of natural numbers is represented by $\mathbb{N}$, and the set of real numbers by $\mathbb{R}$. The $n$-dimensional real vector space is represented by $\mathbb{R}^{n}$, while $\mathbb{R}^{n \times n}$ denotes the space of $n \times n$ real matrices. $\mathcal{I}_{n}$ represents the $n \times n$ identity matrix. For a matrix $\Phi$, $\Phi \succ 0$ indicates that $\Phi$ is positive definite and symmetric. The Euclidean norm of a vector $x \in \mathbb{R}^{n}$ is denoted by $||x||$, and the weighted squared norm is defined as $||x||^{2}_{\Phi} = x^{\top} \Phi x$. $x \leq y$ means that each element of the vector $x$ is less than the corresponding element of the vector $y$. $\mathbf{0}_{n}$ represents the zero vector of dimension $n$.

\section{Preliminaries and Problem Formulation}
\label{2}

To depict the communication structure among MASs, graph theory is utilized as a foundational method. The interaction pattern among agents is modeled by a directed graph $\mathcal{G} = (\mathcal{V}, \mathcal{E}, \mathcal{W})$, where $\mathcal{V} = \{v_{1}, v_{2}, \ldots, v_{N}\}$ signifies the collection of $N$ agents, and  $\mathcal{E} \subseteq \mathcal{V} \times \mathcal{V}$ represents the set of directed edges.
The weighted adjacency matrix $\mathcal{W} = (w_{ij}) \in \mathbb{R}^{N \times N}$ satisfies $w_{ii} = 0$, and $w_{ij} > 0$ if $(v_{j}, v_{i}) \in \mathcal{E}$, indicating that agent $i$ can receive information from agent $j$.
The neighbor set of agent $i$ is defined as $\mathcal{N}_{i} = \left\{ v_{j} \in \mathcal{V} \mid (v_{j}, v_{i}) \in \mathcal{E} \right\}$.

Consider a discrete-time nonlinear MAS consisting of $N$ agents with dynamics described by
\begin{equation}\label{MAS}
x_{i}(t+1) = f(x_{i}(t), u_{i}(t)), \quad t \in \mathbb{N},
\end{equation}
where $i \in \mathcal{N}:=\{i | i =1,2,\ldots, N\}$, $x_{i}(\cdot) \in \mathcal{X} \to \mathbb{R}^n$ denotes the state of agent $i$, and $u_{i}(\cdot) \in \mathcal{U} \to \mathbb{R}^q$ represents its control input, subject to constraints: 
\begin{align}
\mathcal{X}:&=\{x_{i} \in \mathbb{R}^{n}: x_{\mathrm{min}} \leq x_{i} \leq x_{\mathrm{max}}\}, \\
\mathcal{U}:&=\{u_{i} \in \mathbb{R}^{q}: u_{\mathrm{min}} \leq u_{i} \leq u_{\mathrm{max}}\}, \label{U}
\end{align}
with $x_{\mathrm{min}} \in \mathbb{R}^{n}, x_{\mathrm{max}} \in \mathbb{R}^{n}$, $u_{\mathrm{min}} \in \mathbb{R}^{q}$ and $u_{\mathrm{max}} \in \mathbb{R}^{q}$. $f(\cdot): \mathbb{R}^n \times \mathbb{R}^q \to \mathbb{R}^n$ is locally Lipschitz continuous.

To describe the relative degree of system \eqref{MAS}, we consider an output function $z_{i}(t) = g(x_{i}(t))$.
\begin{definition}[\normalfont \textit{Relative degree} \cite{1166537}] \label{def1}
The relative degree of the output $z_{i}(t) = g(x_{i}(t))$ of system \eqref{MAS} is said to be $m$ in relation to the control input $u_{i}(t)$ if for $\forall t \in \mathbb{N}$, there holds
\begin{equation*}
\begin{aligned}
& \frac{\partial}{\partial u_{i}(t)}[g \circ \bar{f}^{l}(f(x_{i}(t),u_{i}(t)))] = \mathbf{0}_{q}, 0 \leq l \leq m-2, \\
& \frac{\partial}{\partial u_{i}(t)}[g \circ \bar{f}^{m-1}(f(x_{i}(t),u_{i}(t)))] \neq \mathbf{0}_{q}, 
\end{aligned}
\end{equation*}
where $g \circ \bar{f}^{l}(\cdot) = g(\bar{f}^{l}(\cdot))$ with $\bar{f}^{0}(x_{i}(t)) = x_{i}(t)$, $\bar{f}(x_{i}(t))$ denotes the uncontrolled state dynamics $f(x_{i}(t),0)$ and $\bar{f}^{l}$ is the $l$-times recursive compositions of $\bar{f}$ in the sense that $\bar{f}^{l}(x_{i}(t)) = f(\bar{f}^{l-1}(x_{i}(t)))$.
\end{definition}

Definition \ref{def1} implies that the control input $u_{i}(t)$ does not directly affect $z_{i}(t)$ but instead influences it after $m$ time steps.
To describe the safety of agent $i$ under system \eqref{MAS} with a relative degree $m$, we define several safe sets and the DHCBF as follows.

Agent $i$ is deemed safe if it initially belongs to the set $\mathcal{C}$ and stays within $\mathcal{C}$ at all future time steps, where $\mathcal{C}$ is defined as:
$$\mathcal{C}:=\left\{x_{i} \in \mathbb{R}^{n} \mid h(x_{i})\geq0 \right\},$$
where $h: \mathbb{R}^n \to \mathbb{R}$ is a continuously differentiable function.

If the relative degree of $h(x_{i}(t))$ to system \eqref{MAS} is $m$, we recursively define the functions $\psi_{l}: \mathbb{R}^{n} \to \mathbb{R}$ for $l = 1,\ldots,m$ as follows:
\begin{equation} \label{psi}
\psi_{l}(x_{i}(t)):=\Delta \psi_{l-1}(x_{i}(t),u_{i}(t)) + \alpha_{l}(\psi_{l-1}(x_{i}(t))),
\end{equation}
where $\psi_{0}(x_{i}(t)):= h(x_{i}(t))$, $\Delta \psi_{l-1}(x_{i}(t),u_{i}(t)) := \psi_{l-1}(x_{i}(t+1)) - \psi_{l-1}(x_{i}(t))$, and $\alpha_{l}(\cdot)$ is a class $\kappa$ function, which is chosen as $\alpha_{l}(\psi_{l-1}(x_{i}(t))) = \phi \psi_{l-1}(x_{i}(t))$ with $\phi \in(0,1] $. The corresponding safe sets are defined as:
\begin{equation}\label{cl}
\mathcal{C}^{l}:=\{x_{i} \in \mathbb{R}^{n} \mid \psi_{l}(x_{i}) \geq 0\}, l = 0,\ldots,m-1.
\end{equation}

\begin{definition}[\normalfont \textit{DHCBF} \cite{9777251}] \label{def2} 
Given $\psi_{l}(x_{i}(t))$ for $l = 1,\ldots$, $m$ and the sets $\mathcal{C}^{l}$ for $l = 0,\ldots,m-1$ as defined in \eqref{psi} and \eqref{cl}, respectively, if the relative degree of a continuously differentiable function $h: \mathbb{R}^{n} \to \mathbb{R}$ to system  \eqref{MAS} is $m$ and there exists $\psi_{m}(\cdot)$ such that:
\begin{equation}\label{DHCBF}
\psi_{m}(x_{i}(t)) \geq 0, \forall x_{i}(t) \in \bigcap \limits_{l=0}^{m-1} \mathcal{C}^{l}, \forall t \in \mathbb{N},
\end{equation}
\end{definition}
then, $h$ is a DHCBF.

Then, the formation control and obstacle avoidance problem for system \eqref{MAS} with a relative degree $m$ can be defined as:
\begin{problem}[\normalfont \textit{Formation Control and Obstacle Avoidance}]\label{prob1}
MAS \eqref{MAS} is said to achieve formation control and obstacle avoidance if for $\forall \epsilon \geq 0$, there exists a controller $u_{i}(t)$ and a time constant $T>0$ such that the following conditions hold for $\forall i \in \mathcal{N}$ and $j \in \mathcal{N}_{i}$:
\begin{align}
& \|x_{ij}(t)-d_{ij}\| \leq \epsilon, t \geq T, \label{formation} \\ 
& x_{i}(t) \in \mathcal{C}^{l}, l = 0, \ldots, m-1, t \geq 0,
\end{align}
where $x_{ij} = x_{i}-x_{j}$ and $d_{ij} \in \mathbb{R}^{n}$ is the desired relative state between agent $i$ and agent $j$. If necessary, we define $y_{ij} = x_{ij}-d_{ij}$.
\end{problem}

To achieve stable formation control, the DCLF is introduced as follows:

\begin{definition}[\normalfont \textit{DCLF}] \label{def3}
A continuously differentiable function $v: \mathbb{R}^{n} \to \mathbb{R}$ is a DCLF for MAS \eqref{MAS} if there exist contants $c_{1} >0$, $c_{2}>0$ and $0 < \lambda' \leq 1$ such that for $\forall y_{ij} \in \mathbb{R}^{n}$, $c_{1}\|y_{ij}\|^{2} \leq v(y_{ij}) \leq c_{2}\|y_{ij}\|^{2}$ and
\begin{equation}\label{DCLF}
\Delta v(y_{ij}(t)) + \lambda' v(y_{ij}(t)) \leq 0,
\end{equation}
where $i \in \mathcal{N}$, $j \in \mathcal{N}_{i}$ and $\Delta v(y_{ij}(t)) := v(y_{ij}(t+1)-v(y_{ij}(t)))$.
\end{definition}

\section{Distributed Safety-critical Model Predictive Control}
\label{3}
In this section, a DSMPC algorithm based on DHCBFs and DCLFs is proposed to achieve formation control and obstacle avoidance, which can enable distributed operation by introducing the estimated neighbor states and compatibility constraints.

At each time step $t$, Problem \ref{prob1} is formulated as the following optimization problem over a horizon $T_{p}$.

\noindent\rule{\columnwidth}{0.4pt}
\textbf{Centralized MPC:}
\begin{subequations}\label{optimization}
\begin{align}
 \min_{x_{i},u_{i}} & \sum_{i=1}^{N} \big( \sum_{k=0}^{T_{p}-1} J_{i}(x_{i}(k | t),u_{i}(k | t)) + L_{i}(x_{i}(T_{p}|t)) \big) \notag \\
s.t. \quad & x_{i}(k+1|t) = f(x_{i}(k|t),u_{i}(k|t)), \\
\quad & u_{i}(k|t) \in \mathcal{U}, x_{i}(k|t) \in \mathcal{X}, \\
\quad & x_{i}(k|t) \in \mathcal{C}^{l}, l = 0, \ldots, m-1, \label{safetycst} \\
\quad & \|y_{ij}(T_{p}|t)\| \in \mathcal{T}. \label{terminal}
\end{align}
\end{subequations}
\noindent\rule{\columnwidth}{0.4pt}
where $i \in \mathcal{N}$, $j \in \mathcal{N}_{i}$, $T_{p}$ is the prediction horizon. $x_{i}(k|t)$ represents the predicted state at time step $t+k$, starting from the current state $x_{i}(t)$, similarly for $u_{i}(k|t)$. $J_{i}(\cdot)$ and $L_{i}(\cdot)$ are the stage and terminal cost function, respectively. \eqref{safetycst} and \eqref{terminal} are safety and terminal constraints, where $\mathcal{T}$ is the terminal set. Their explicit expressions will be given by \eqref{dhcbfa} and \eqref{tv}.

Since states of agent $i$ and $j$ are coupled in \eqref{terminal}, and agent $i$ does not have access to $x_{j}(T_{p}|t)$ at the current time step $t$, to optimize in a distributed manner, we need agent $j$ to estimate its future states and transmit them to agent $i$ at time step $t$.
To address this, and considering that each agent simultaneously acts as a neighbor to others, we introduce estimated states and control inputs for each agent, which their neighbors will use during prediction and optimization.
\begin{enumerate}
\item $x_{i}^{a}(k|t)$: estimated state, $k = 1,2,\ldots, T_{p}$;
\item $u_{i}^{a}(k|t)$: estimated control input, $k = 0,1,\ldots, T_{p}-1$;
\end{enumerate}
where $i \in \mathcal{N}$.

At time step $t$, the estimated control inputs $u_{i}^{a}(k|t)$ are generated by updating the control sequence from the prior time step. Specifically, the optimal control input sequence of agent $i$ is derived by solving the optimization problem (Note that the optimization problem here refers to \eqref{DSMPC}, rather than \eqref{optimization}), and is defined as:
\begin{equation}\label{optimalu}
\mathbf{u}^{\ast}_{i}=\{u^{\ast}_{i}(0|t),u^{\ast}_{i}(1|t),\dots,u^{\ast}_{i}(T_{p}-1|t)\}.
\end{equation}
Then, the first input value $u_{i}^{*}(0|t)$ is discarded, and a new control input, computed via a predefined function $\tau(\cdot)$, is appended to the end of the sequence:
\begin{equation}
\begin{aligned}\label{estimatedu_revised}
u_{i}^{a}&(k|t+1)  \\
& =\left\{\begin{array}{ll}
	u_{i}^{*}(k+1|t),  k=0,1,\ldots,T_{p}-2,      \\
	\tau (x_{i}^{*}(T_{p}|t),\{x_{j}^{*}(T_{p}|t)\}_{j \in \mathcal{N}_{i}}),  k=T_{p}-1,
\end{array}\right.
\end{aligned}
\end{equation}
where $x_{i}^{\ast}(T_{p}|t)$ is the optimal value of $x_{i}(T_{p}|t)$, $\tau(\cdot)$ computes the final estimated control input at $t+1$ based on optimal states at the prediction time step $T_{p}$ of time $t$. Specifically, $\tau(\cdot)$ is defined as:
\begin{equation} \label{unp_revised}
\begin{aligned}
    \tau  (x_{i}^{*}(T_{p}|t),\{x_{j}^{*}&(T_{p}|t)\}_{j \in \mathcal{N}_{i}}) \\
    & =\tilde{u}_{i}(x_{i}^{*}(T_{p}|t),\{x_{j}^{*}(T_{p}|t)\}_{j \in \mathcal{N}_{i}}),
\end{aligned}
\end{equation}
with $\tilde{u}_{i}$ being a consensus-based formation control protocol:
\begin{equation} \label{nominalu_revised}
\tilde{u}_{i} (x_{i}, \{x_{j}\}_{j \in \mathcal{N}_{i}})= -K \sum_{j \in \mathcal{N}_{i}}w_{ij}y_{ij}.
\end{equation}
where $K \in \mathbb{R}^{q \times n}$ denotes the control gain matrix. 
\begin{remark}
While \eqref{nominalu_revised} defines $\tilde{u}_{i}$ as a linear consensus-based controller, alternative consensus-based protocols, such as the distributed nonlinear control protocol in \cite{10497839}, can also be employed. It essentially serves as a nominal controller and does not affect the final computed control input. Additionally, while we focus on the coupling constraints in \eqref{terminal}, collision avoidance between agents may also introduce coupling in \eqref{safetycst}. In such cases, a similar control approach can be applied to decouple \eqref{safetycst}, enabling the DHCBF to be formulated in a completely distributed way.
\end{remark}

The estimated states are recursively updated as follows:
\begin{equation} \label{estimatedstate_revised}
x_{i}^{a}(k+1|t) = f(x_{i}^{a}(k|t), u_{i}^{a}(k|t)),
\end{equation}
with the initial estimated state sequence set as $x_{i}^{a}(k|0) = x_{i}(0)$, $k = 0,\ldots, T_{p}-1$.
 
Since each agent relies on estimated states for decision-making, errors between real and estimated states may accumulate. To address this, a compatibility constraint is introduced to limit the deviation:
\begin{equation} \label{compatibility}
\eta_{i}(t) = \frac{\gamma \sum_{j \in \mathcal{N}_{i}} \|y_{i j}^{\ast}(t)\|_{Q}}{(\delta T_{p}-\delta)\sum_{j \in \mathcal{N}_{i}} \zeta_{ij}(t)},
\end{equation}
where $\gamma \in [0,1)$, $\delta$ denotes the sampling time, $y_{ij}^{\ast}(t) = x_{i}^{\ast}(t)-x_{j}^{\ast}(t)-d_{ij}$ represents the optimal value of $y_{ij}(t)$, and $\zeta_{ij}(t)$ is defined as
\begin{equation} \label{zeta}
\zeta_{ij}(t) = \max_{k \in [1,T_{p}-1]} \| x_{j}^{a}(k|t)- x_{i}^{a}(k|t)-d_{ji} \|. 
\end{equation}
Constraint \eqref{compatibility} will be used to ensure that the estimated states do not diverge excessively from the real state.

Based on the definition of the estimated states, we now proceed to present our DSMPC algorithm. Specifically, the safety constraint \eqref{safetycst} is constructed using a DHCBF. The terminal constraint \eqref{terminal} is formulated in a distributed manner by incorporating the estimated states of neighboring agents through a DCLF.

For safety, each agent $i$ has a sequence of safe sets $\mathcal{C}^{l}$ for $l = 0, \ldots, m-1$, as defined in \eqref{cl}. If we define $h(x_{i}): \mathbb{R}^{n} \to \mathbb{R}$ as a DHCBF with relative degree $m$ for agent $i$, then any Lipschitz continuous controller $u_{i}$ satisfying the following condition will ensures that $\mathcal{C}^{0} \cap \cdots \cap{C}^{m-1}$ remains forward invariant for agent $i$ with dynamics \eqref{MAS}:
\begin{equation}\label{dhcbfa}
\psi_{m}(x_{i}(t),u_{i}(t)) \geq 0, t \geq 0.
\end{equation}
This guarantees that states of agent $i$ satisfy $x_{i}(t) \in \mathcal{C}^{0} \cap \cdots \cap{C}^{m-1}, \forall t \geq 0$. Thus, we will replace \eqref{safetycst} with \eqref{dhcbfa} and treat it as a safety constraint.

For formation control, we define a DCLF to construct the terminal constraint as follows:
\begin{align} 
v(\tilde{y}_{ij}) &= \sum_{j \in \mathcal{N}_{i}}\|\tilde y_{ij}\|^{2}, \label{vv} \\
v(\tilde{y}_{ij}(T_{p}|t)) & \leq \lambda v(\tilde{y}_{ij}(T_{p}|t-1)), \label{tv}
\end{align}
where $i \in \mathcal{N}$, $j \in \mathcal{N}_{i}$, $\tilde{y}_{ij} = x_{i} - x^{a}_{j}-d_{ij}$, $\lambda = 1-\lambda'$. Consequently, \eqref{tv} will replace \eqref{terminal} as the terminal constraint.

Then, the DSMPC program can be formulated as the following optimization problem for each agent $i$:

\noindent\rule{\columnwidth}{0.4pt}
\textbf{DSMPC for agent $i$:}  
\begin{subequations} \label{DSMPC}
\begin{align}
& \min_{x_{i},u_{i}} J_{i}(x_{i}(k|t),u_{i}(k|t)) + L_{i}(x_{i}(T_{p}|k)) \label{DSMPCobj} \\
 \text{s.t.} \quad & x_{i}(k+1|t) = f(x_{i}(k|t),u_{i}(k|t)), \label{DSMPCdynamics} \\
 \quad \quad & u_{i}(k|t) \in \mathcal{U}, x_{i}(k|t) \in \mathcal{X}, \label{DSMPCux} \\
 \quad \quad & \psi_{m}(x_{i}(k|t),u_{i}(k|t)) \geq 0, \label{DSMPCdhcbf}\\
 \quad \quad & \| \tilde{x}_{ii}(k|t) \|_{Q} \leq \eta_{i}(t), k \in [1,T_{p}-1] \label{DSMPCcom}\\
  \quad \quad & v(\tilde{y}_{ij}(T_{p}|t)) \leq 
\begin{cases}
  \lambda^{T_{p}} v(\tilde{y}_{ij}(T_{p}|0)),  & t = 0\\
  \lambda v(\tilde{y}_{ij}(T_{p}|t-1)), & t > 0
\end{cases} \label{DSMPCv}
\end{align}
\end{subequations}
\noindent\rule{\columnwidth}{0.4pt}
where \eqref{DSMPCobj} is the cost function of agent $i$:
\begin{align}
J_{i} &=\sum_{k=0}^{T_{p}-1}(\sum_{j \in \mathcal{N}_{i}}\|\tilde{y}_{ij}(k|t)\|_{Q} + \|u_{i}(k|t)\|_{R}), \label{stagecost} \\
L_{i} &= \sum_{j \in \mathcal{N}_{i}} \|\tilde{y}_{ij}(T_{p}|t)\|_{Q},
\end{align}
with $Q\succ 0$, $R \succ 0$. The constraint \eqref{DSMPCdhcbf} ensures the safety of agent $i$ over the prediction horizon $T_{p}$. The compatibility constraint \eqref{DSMPCcom} limits the estimated error. The terminal constraint $\eqref{DSMPCv}$ guarantees that the state deviations $\tilde{y}_{ij}$ are driven toward convergence by the end of the prediction horizon.

\begin{remark} \label{rmk1}
\eqref{DSMPCux}, \eqref{DSMPCdhcbf}, and \eqref{DSMPCv} may be mutually incompatible, potentially rendering the infeasibility of the DSMPC problem \eqref{DSMPC}. To address this issue, we can introduce a non-negative penalty term $\rho \geq 0$ to enhance feasibility. Specifically, $\rho$ is incorporated into the terminal constraint \eqref{DSMPCv} as:
\begin{equation} \label{vrho}
v(\tilde{y}_{ij}(T_{p}|t)) \leq 
\begin{cases}
  \lambda^{T_{p}} v(\tilde{y}_{ij}(T_{p}|0)) + \rho,  & t = 0 \\
  \lambda v(\tilde{y}_{ij}(T_{p}|t-1)) + \rho, & t > 0
\end{cases}
\end{equation}
Accordingly, the terminal cost function is modified to:
\begin{equation} \label{terminalcost}
 L_{i}(x_{i}(T_{p}|t), \rho) = \sum_{j \in \mathcal{N}_{i}} \|\tilde{y}_{ij}(T_{p}|t)\|_{Q} + \rho^{2}.
\end{equation}
However, this relaxation only mitigates potential conflicts between \eqref{DSMPCv} and the other two constraints, \eqref{DSMPCcom} and \eqref{DSMPCdhcbf}, but does not resolve the conflict between \eqref{DSMPCcom} and \eqref{DSMPCdhcbf}. A detailed feasibility analysis in Theorem \ref{them1} will further clarify that \eqref{DSMPCcom} can always be satisfied without violating \eqref{DSMPCdhcbf}.\end{remark}

At time step $t$, the optimal control input sequence of \eqref{DSMPC} is given by \eqref{optimalu}. By applying the first input $u_{i}^{\ast}(0|t)$ to system \eqref{MAS}, the subsequent state is obtained:
\begin{equation}\label{transstate}
x_{i}^{\ast}(t+1)=x^{\ast}_{i}(1|t)=f(x_{i}^{\ast}(0|t),u_{i}^{\ast}(0|t)),
\end{equation}
where $x_{i}^{\ast}(t+1)$ is the optimal state of agent $i$ at time $t+1$. This process is repeated in a receding horizon fashion: at each time step, \eqref{DSMPC} is solved, the first control input is applied, and the optimization is updated based on the new state. The detailed implementation of this iterative procedure is outlined in Algorithm \ref{alg1}.

\begin{algorithm}[h] 
\caption{DSMPC algorithm} 
\label{alg1} 
\begin{algorithmic}[1]
\REQUIRE $\mathcal{U}$; $\mathcal{X}$; $T_{p}$; $h$; $v$; $d_{ij}$, $\epsilon$.
\ENSURE $\mathbf{u}_{i}^{\ast}$.
\STATE Set $t=0$;
\STATE Initialize real and estimated states of all agents;
\WHILE{$\|y_{i j}\| > \epsilon$}
\STATE Receive $x_{j}^{a}(k|t)$, $k = 0:T_{p}$, $j \in \mathcal{N}_{i}$;
\STATE Solve \eqref{DSMPC} at time step $t$ to obtain $\mathbf{u}_{i}^{\ast}$;
\STATE Substitute $u_{i}^{\ast}(0|t)$ to \eqref{MAS};
\STATE Update agent state $x_{i}^{\ast}(t+1)$ by \eqref{transstate};
\STATE Compute $x_{i}^{a}(k|t+1)$ using \eqref{estimatedstate_revised};
\STATE Share $x_{i}^{a}(k|t+1)$ to neighbors;
\STATE $t \leftarrow t+1$;
\ENDWHILE
\end{algorithmic}
\end{algorithm}

Under the assumption on the bound of $\psi_{m-1}(x_{i}(t))$, we prove the feasibility of the DSMPC program \eqref{DSMPC} and the stability of Algorithm \ref{alg1}.
\begin{assumption}\label{asp1}
Given $\psi_{l}(x_{i}), l = 1,\ldots$, $m$, and $\mathcal{C}^{l}$, $l = 0,\ldots,m-1$, defined by \eqref{psi} and \eqref{cl}, respectively. For a given $f(\cdot): \mathbb{R}^{n} \to \mathbb{R}^{n}$ and a DHCBF candidate $h(\cdot): \mathbb{R}^{n} \to \mathbb{R}$ with relative degree $m$, we assume that:
\begin{equation} \label{psim}
\begin{aligned}
& \psi_{m-1} \circ f(x_{i}, u_{i}^{M}) = \sup_{u_{i} \in \mathcal{U}}[\psi_{m-1} \circ f(x_{i},u_{i})], \\
& \psi_{m-1} \circ f(x_{i},u_{i}^{M}) - \psi_{m-1}(x_{i}) + \alpha_{m}(\psi_{m-1}(x_{i})) \geq 0,
\end{aligned}
\end{equation}
\end{assumption}
where $u_{\mathrm{min}} \leq u_{i}^{M} \leq u_{\mathrm{max}}$.
\begin{figure*}[t]
\centering
\subfigure[]{
		\includegraphics[scale=0.23]{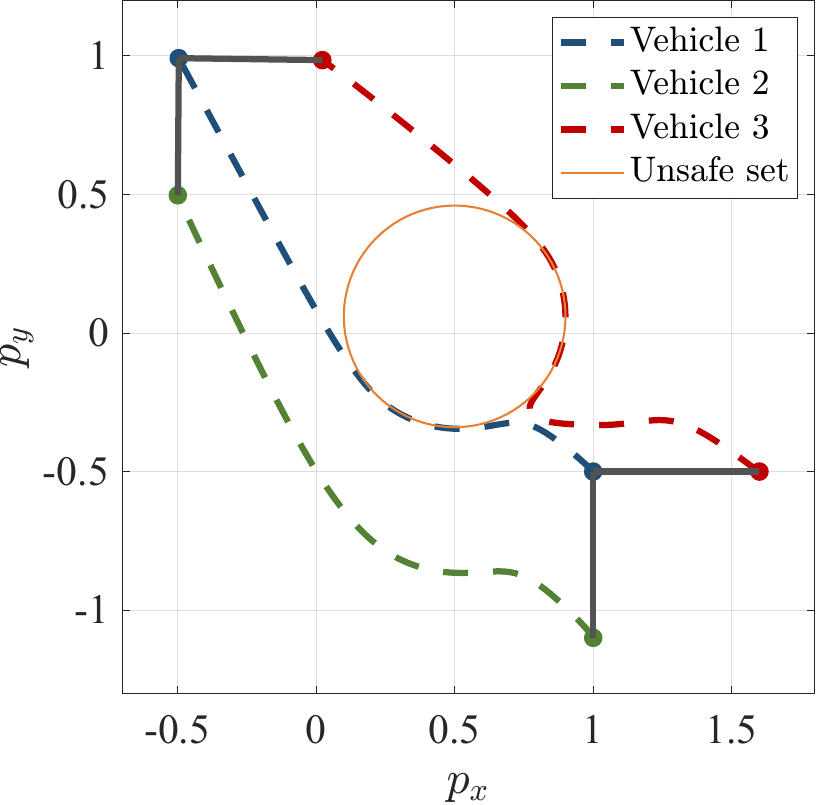}}
\subfigure[]{
		\includegraphics[scale=0.23]{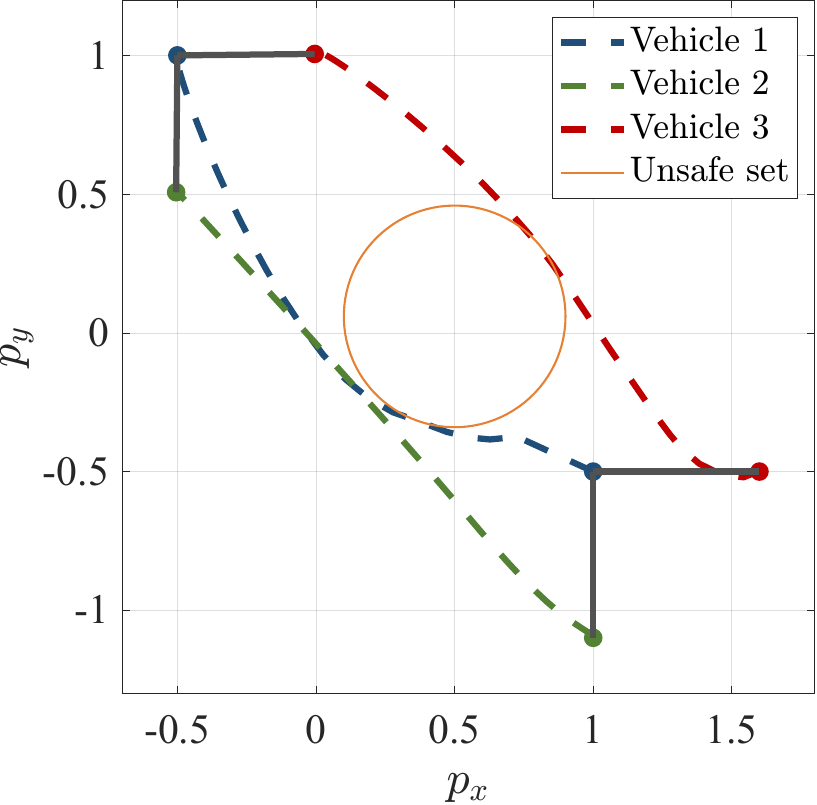}}
\subfigure[]{
		\includegraphics[scale=0.23]{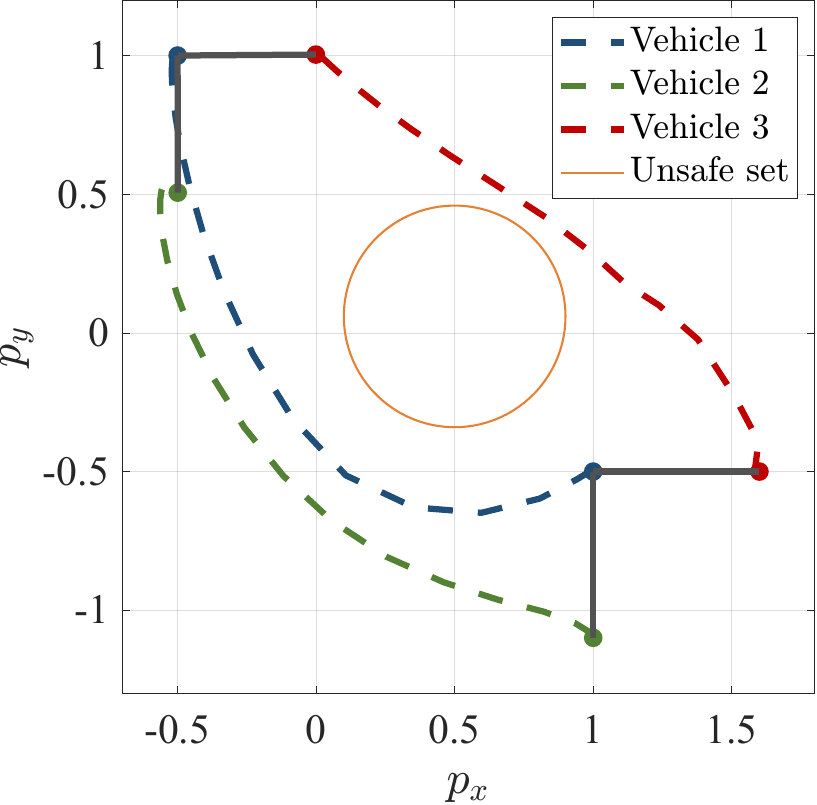}}
\subfigure[]{
        \includegraphics[scale=0.23]{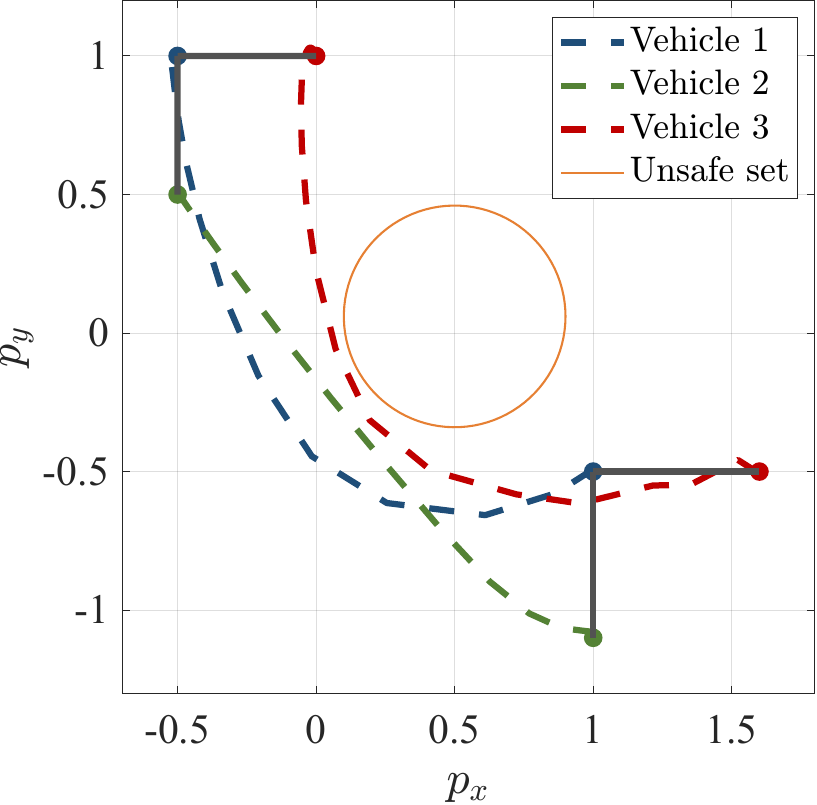}}
\subfigure[]{
        \includegraphics[scale=0.23]{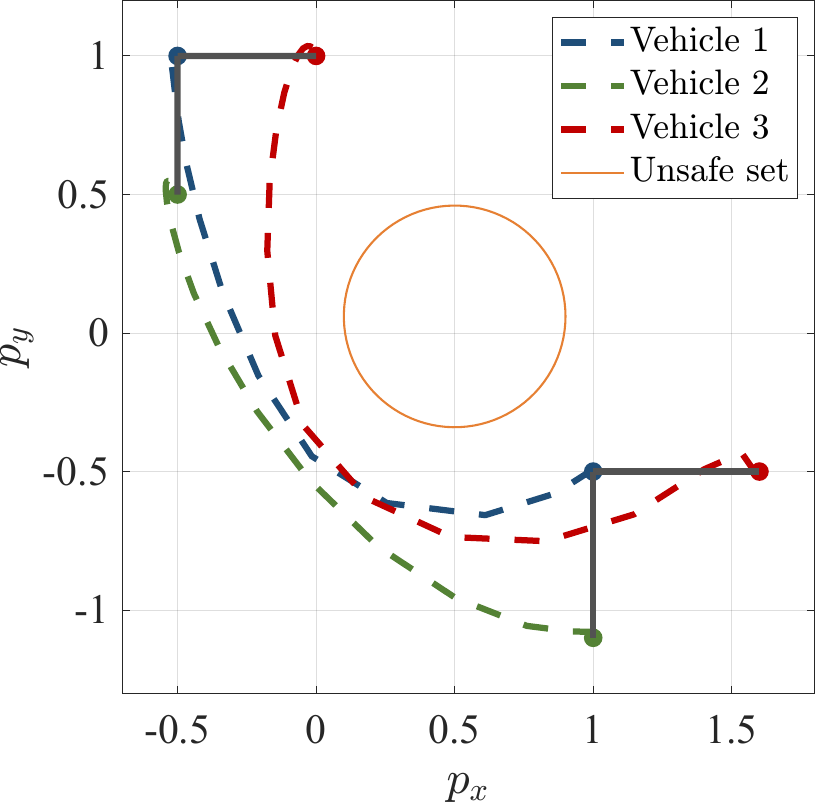}}
\caption{A multi-vehicle formation avoids an obstacle using different methods. (a) The NC-CBF method; (b) The MPC-DC method with $T_{p}=20$; (c) The DSMPC algorithm with $T_{p}=3$ and $\gamma = 0.1$; (d) The DSMPC algorithm with $T_{p}=5$ and $\gamma = 0.1$; (e) The DSMPC algorithm with $T_{p}=5$ and $\gamma = 0.8$.}
\label{fig4}
\end{figure*}
\begin{remark}
Assumption \ref{asp1} is a mild assumption because it only requires the control input to be selected from the feasible set to ensure the non-negativity of $\psi_{m-1}(x_{i})$, without imposing excessive restrictions on the system dynamics or feasible control strategies. This assumption is similar to those commonly used in classical CBF design \cite{7040372, 10145590}, which is widely applied in the safety-critical control and can typically be satisfied by appropriately choosing the parameter $\alpha_{m}(\cdot)$. Therefore, in practical applications, it does not constitute an overly restrictive constraint.
\end{remark}

Based on Assumption \ref{asp1}, we analyze feasibility and stability as follows:

\begin{theorem}\label{them1}
Suppose Assumption \ref{asp1} holds. If DSMPC program \eqref{DSMPC} is initially feasible, then it will remain feasible at all future time steps.
\end{theorem}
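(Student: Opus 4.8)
\noindent\emph{Proof strategy.} I would argue by induction on the time step $t$, via the standard ``shift-and-append'' construction but with the appended input chosen according to Assumption \ref{asp1}. Suppose \eqref{DSMPC} is feasible at $t$ for every agent $i$, with optimal solution $\mathbf{u}_i^\ast$ and predicted states $x_i^\ast(k|t)$. As a candidate at $t+1$ I would take, for each $i$, the input $\hat u_i(k|t+1)=u_i^\ast(k+1|t)$ for $k=0,\dots,T_p-2$ and $\hat u_i(T_p-1|t+1)=u_i^M\big(x_i^\ast(T_p|t)\big)$, the maximizer from \eqref{psim}, propagated through \eqref{DSMPCdynamics} from the new measured state $x_i^\ast(t+1)=x_i^\ast(1|t)$. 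It then suffices to verify the constraint blocks \eqref{DSMPCux}--\eqref{DSMPCv} at $t+1$.

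The state/input and DHCBF constraints are checked segment by segment. On $k=0,\dots,T_p-2$ the candidate is a time-shifted piece of the optimal time-$t$ trajectory, so $\hat u_i(k|t+1)\in\mathcal{U}$, $\hat x_i(k|t+1)=x_i^\ast(k+1|t)\in\mathcal{X}$, and $\psi_m(\hat x_i(k|t+1),\hat u_i(k|t+1))=\psi_m(x_i^\ast(k+1|t),u_i^\ast(k+1|t))\ge0$ are all inherited from feasibility at $t$. For the last step, $\hat u_i(T_p-1|t+1)=u_i^M\in\mathcal{U}$ by the range bound in Assumption \ref{asp1}; and since $x_i(t)\in\bigcap_l\mathcal{C}^l$ (problem data) and \eqref{DSMPCdhcbf} held over the whole time-$t$ horizon, the relative-degree-$m$ cascade implied by $\psi_m\ge0$ propagates non-negativity of $\psi_{m-1}$ (and of every lower $\psi_l$) along the time-$t$ trajectory up to $x_i^\ast(T_p|t)$; the second inequality of \eqref{psim} then gives exactly $\psi_m(x_i^\ast(T_p|t),u_i^M)\ge0$, and the same cascade keeps $\hat x_i(T_p|t+1)=f(x_i^\ast(T_p|t),u_i^M)$ inside $\bigcap_l\mathcal{C}^l\subseteq\mathcal{X}$.

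The crux is that this candidate also satisfies the compatibility constraint \eqref{DSMPCcom} without disturbing \eqref{DSMPCdhcbf} --- the point flagged in Remark \ref{rmk1}. I would observe that \eqref{DSMPCcom} is imposed only for $k\in[1,T_p-1]$, and on exactly that index range the candidate inputs $\hat u_i(k|t+1)=u_i^\ast(k+1|t)$ coincide with the estimated inputs $u_i^a(k|t+1)$ of \eqref{estimatedu_revised}; tracing \eqref{estimatedstate_revised} from $x_i^a(0|t+1)=x_i^\ast(t+1)$ gives $\hat x_i(k|t+1)=x_i^a(k|t+1)$, hence $\tilde x_{ii}(k|t+1)=0$ and $\|\tilde x_{ii}(k|t+1)\|_Q=0\le\eta_i(t+1)$. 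Substituting $u_i^M$ for $\tau$ perturbs only the step-$T_p$ prediction, which lies outside $[1,T_p-1]$, so it costs nothing in \eqref{DSMPCcom} while securing the DHCBF inequality at step $T_p-1$ --- this is precisely the compatibility asserted in Remark \ref{rmk1}. Finally, the terminal contraction \eqref{DSMPCv} generally fails for this candidate, because $u_i^M$ is not the consensus input $\tau$ for which the DCLF decrease \eqref{DCLF} was tailored; I would close that gap exactly as anticipated in Remark \ref{rmk1}, using the relaxed form \eqref{vrho} with $\rho=\max\{0,\,v(\tilde y_{ij}(T_p|t+1))-\lambda v(\tilde y_{ij}(T_p|t))\}\ge0$, which is finite since $v\ge0$. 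All constraints are then met, so \eqref{DSMPC} is feasible at $t+1$, and induction finishes the proof.

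I expect the delicate part to be the compatibility bookkeeping: making watertight that on $k\in[1,T_p-1]$ the candidate really reproduces the neighbor-shared estimated trajectory $x_i^a(\cdot|t+1)$ (this hinges on the exact update rules \eqref{estimatedu_revised}--\eqref{estimatedstate_revised} and on the $u_i^M$-perturbation being confined to index $T_p$), together with the forward-invariance cascade needed to bring $x_i^\ast(T_p|t)$ into the regime where Assumption \ref{asp1} can be invoked. The remaining pieces --- $\mathcal{U}/\mathcal{X}$-feasibility of the shift and the $\rho$-relaxation of the terminal constraint --- are routine.
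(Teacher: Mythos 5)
Your proposal is correct and follows essentially the same route as the paper's proof: shift the time-$t$ optimal sequence, append $u_i^M$ from Assumption \ref{asp1} to secure \eqref{DSMPCdhcbf} at the last step, note that the shifted candidate coincides with the estimated trajectory $x_i^a(\cdot|t+1)$ on $k\in[1,T_p-1]$ so that $\|\tilde{x}_{ii}(k|t+1)\|_Q=0\le\eta_i(t+1)$, and absorb any terminal-constraint violation into the slack $\rho\ge 0$ of \eqref{vrho}. Your version is, if anything, slightly more explicit than the paper's (e.g., the concrete choice of $\rho$ and the observation that the $u_i^M$-perturbation only affects the step-$T_p$ prediction), but there is no substantive difference in the argument.
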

\begin{proof}
See the Appendix.
\end{proof}

\begin{remark}\label{rmk3}
The MPC-CBF method in \cite{9483029} enhances feasibility through an adaptive factor, which is a meaningful attempt toward practical applicability, though it lacks theoretical feasibility guarantees. 
In contrast, our proposed DSMPC framework not only preserves the advantages of distributed computation and safety-critical guarantees but also provides rigorous theoretical assurances of recursive feasibility. Furthermore, the assumption that DSMPC \eqref{DSMPC} is initially feasible is mild and commonly adopted in related works \cite{8039204, 9779571}. Therefore, despite the nonlinear and distributed nature of \eqref{DSMPC}, our approach remains computationally tractable.
\end{remark}

\begin{theorem}\label{them2}
Suppose Assumption \ref{asp1} holds. The formation control and obstacle avoidance can be achieved via controllers computed by Algorithm \ref{alg1}.
\end{theorem}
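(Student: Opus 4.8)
The plan is to verify the two requirements of Problem~\ref{prob1} in turn: forward invariance of $\bigcap_{l=0}^{m-1}\mathcal{C}^{l}$ (obstacle avoidance) and convergence of the relative errors $y_{ij}$ to the formation, i.e. \eqref{formation}. Throughout I would use Theorem~\ref{them1}, so that \eqref{DSMPC} has an optimal solution at every step and the receding-horizon update \eqref{transstate} is well defined.

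For safety I would argue by induction on $t$. Initial feasibility forces $\psi_{l}(x_{i}(0))\ge 0$ for $l=0,\ldots,m-1$. Assuming $\psi_{l}(x_{i}(t))\ge 0$ for all such $l$, the constraint \eqref{DSMPCdhcbf} at $k=0$ together with the fact that $u_{i}^{\ast}(0|t)$ is the input actually applied gives $\psi_{m}(x_{i}(t),u_{i}^{\ast}(0|t))\ge 0$. Unwinding the recursion \eqref{psi} with $\alpha_{l}(s)=\phi s$ yields $\psi_{l-1}(x_{i}(t+1))=\psi_{l}(x_{i}(t))+(1-\phi)\,\psi_{l-1}(x_{i}(t))$, so non-negativity is propagated one step forward; cascading this from $l=m$ down to $l=1$ shows $\psi_{l}(x_{i}(t+1))\ge 0$ for $l=0,\ldots,m-1$, i.e. $x_{i}(t+1)\in\bigcap_{l=0}^{m-1}\mathcal{C}^{l}$. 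This closes the induction for every $i\in\mathcal{N}$ and every $t\ge 0$, giving obstacle avoidance.

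For the formation, I would first read off from the terminal constraint \eqref{DSMPCv} that $v(\tilde y_{ij}(T_{p}|t))$ decays geometrically --- by at worst the factor $\lambda=1-\lambda'\in[0,1)$ per step and by $\lambda^{T_{p}}$ over the first horizon --- so by the quadratic form \eqref{vv} one has $\tilde y_{ij}(T_{p}|t)\to 0$. I would then run the receding-horizon stability argument using the shifted candidate $\{u_{i}^{\ast}(1|t),\ldots,u_{i}^{\ast}(T_{p}-1|t),\tau(\cdot)\}$ from \eqref{estimatedu_revised}: one checks it is feasible for \eqref{DSMPC} because (i) the DHCBF inequality for the appended step holds by Assumption~\ref{asp1}, which supplies an input in $\mathcal{U}$ attaining the supremum of $\psi_{m-1}\circ f$ with a non-negative residual for $\psi_{m}$; (ii) the appended terminal inequality \eqref{DSMPCv} holds because $\tau$ reduces to the consensus protocol \eqref{nominalu_revised}, whose DCLF property \eqref{DCLF} contracts $v$ by exactly the factor $\lambda$; and (iii) the compatibility constraint \eqref{DSMPCcom} is met because $\eta_{i}(t)$ in \eqref{compatibility} is calibrated (via the factor $\gamma<1$ and the normalization by $\sum_{j}\zeta_{ij}(t)$) to dominate the one-step drift between an agent's predicted trajectory and the trajectory it broadcasts through \eqref{estimatedu_revised}--\eqref{estimatedstate_revised}. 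Feasibility of this candidate makes the closed-loop cost $\sum_{i}J_{i}^{\ast}(t)$ non-increasing up to correction terms that vanish as $\tilde y_{ij}(T_{p}|t)\to 0$; being bounded below it converges, its increments tend to zero, and hence the $k=0$ stage terms $\sum_{j}\|\tilde y_{ij}(0|t)\|_{Q}+\|u_{i}^{\ast}(0|t)\|_{R}$ tend to zero. Because the broadcast estimate coincides with the true state at $k=0$, $\tilde y_{ij}(0|t)=y_{ij}(t)$, so $\|x_{ij}(t)-d_{ij}\|=\|y_{ij}(t)\|\to 0$; the compatibility bound additionally forces the estimation mismatch $\|x_{j}^{a}(k|t)-x_{j}(t+k)\|$ to vanish along with $\eta_{j}(t)$, which keeps the distributed quantities consistent with the realized trajectory in the limit. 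Consequently, for every $\epsilon>0$ there is a finite $T$ with $\|x_{ij}(t)-d_{ij}\|\le\epsilon$ for all $t\ge T$, so \eqref{formation} holds and the while loop of Algorithm~\ref{alg1} terminates.

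The step I expect to be the main obstacle is the feasibility check of the shifted candidate against the compatibility constraint \eqref{DSMPCcom} jointly with the DHCBF constraint \eqref{DSMPCdhcbf}. Since each agent's broadcast sequence $x_{i}^{a}(\cdot|t)$ enters its neighbors' constraints, one cannot argue agent by agent but must show, simultaneously over the whole network, that the time-varying data-dependent threshold $\eta_{i}(t)$ can be respected without forcing $\psi_{m}\ge 0$ into infeasibility --- precisely the coupling flagged in Remark~\ref{rmk1}. Making this compatible with the vanishing of the perturbation terms in the cost comparison, so that asymptotic (and hence finite-$T$) convergence actually follows, is where the argument will require the most care and will lean on the construction behind Theorem~\ref{them1}.
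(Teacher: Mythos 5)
Your safety argument and your use of the terminal constraint to obtain geometric decay of $v(\tilde{y}_{ij}(T_{p}|t))$ are both sound, but the second half of your formation argument --- the receding-horizon cost-decrease --- has a genuine gap. The stage cost at time $t+1$ is built from $\tilde{y}_{ij}(k|t+1)=x_{i}(k|t+1)-x_{j}^{a}(k|t+1)-d_{ij}$, and the broadcast sequence $x_{j}^{a}(\cdot|t+1)=x_{j}^{\ast}(\cdot+1|t)$ is \emph{not} the shift of $x_{j}^{a}(\cdot|t)=x_{j}^{\ast}(\cdot+1|t-1)$: the two differ by the compatibility gap $\tilde{x}_{jj}(\cdot|t)$, which is only bounded by $\eta_{j}(t)$. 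Hence the shifted candidate's cost at $t+1$ exceeds $J_{i}^{\ast}(t)$ minus the first stage term not only by terminal corrections that vanish with $\tilde{y}_{ij}(T_{p}|t)$, but also by terms of order $T_{p}\sum_{j\in\mathcal{N}_{i}}\eta_{j}(t)$; telescoping then requires $\sum_{t}\eta_{j}(t)<\infty$. But $\eta_{j}(t)$ is defined in \eqref{compatibility} through $\|y_{jl}^{\ast}(t)\|_{Q}$ --- the very quantity whose convergence you want to extract from the cost decrease --- so the argument is circular as written. You correctly flag the compatibility constraint as the hard point, but you locate the difficulty in the feasibility check (which Theorem \ref{them1} already settles, the shifted candidate making the left side of \eqref{DSMPCcom} exactly zero) rather than in this non-telescoping error.

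The paper does not use a value-function argument at all. It proceeds directly: the terminal constraint gives $\|\tilde{y}_{ij}\|\le\tilde{\epsilon}$ for large $t$; the triangle inequality combined with \eqref{DSMPCcom} gives $\|y_{ij}(k|t)\|\le\|\tilde{y}_{ij}(k|t)\|+\eta_{i}(t)$; and \eqref{compatibility} itself then yields a self-contained recursion of the form $\eta_{i}(t)\le c\gamma\tilde{\epsilon}+c\gamma\,\eta_{i}(t-1)$, which unrolls into a convergent geometric series because $\gamma<1$, giving $\lim_{t\to\infty}\|y_{ij}(k|t)\|\le\tilde{\epsilon}\bigl(1+\gamma c_{\max}/(1-\gamma)\bigr)\le\epsilon$; obstacle avoidance is then just the statement that \eqref{DSMPCdhcbf} holds at every step by Theorem \ref{them1}. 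To salvage your route you would have to first prove summability of $\eta_{i}(t)$ independently, which essentially reproduces the paper's recursion anyway --- so the direct argument is both shorter and the one that actually closes.
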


\begin{proof}
See the Appendix.
\end{proof}

\section{Simulation}
\label{5}

This section validates the DSMPC algorithm in a multi-vehicle system for formation control and obstacle avoidance. Besides, we compare its effectiveness against other methods in \cite{2023centralized,7857061, 7040372}, highlighting that DSMPC achieves improved performance and reduced computation time.

Consider a multi-vehicle system with $N=3$ under a directed communication topology, where the edge set is given by $\mathscr{E} = \{(2,1), (3,1)\}$. The dynamics of each vehicle are described as follows:
\begin{equation}\label{ex1}
\begin{bmatrix}
    p_{x_{i}}(t+1) \\ p_{y_{i}}(t+1) \\ v_{x_{i}}(t+1) \\ v_{y_{i}}(t+1)
\end{bmatrix}
=
\begin{bmatrix}
    p_{x_{i}}(t) + \delta  v_{x_{i}}(t) \\ p_{y_{i}}(t)+\delta v_{y_{i}}(t) \\ v_{x_{i}}(t)+\delta (av^{3}_{x_{i}}(t) + u_{x_{i}}(t))\\ v_{y_{i}}(t)+\delta(av^{3}_{y_{i}}(t)+u_{y_{i}}(t))
\end{bmatrix}
\end{equation}
where $i = 1,2,3$, $a = -3$ and $\delta = 0.1$. The position of the $i$-th vehicle is $(p_{x_{i}},p_{y_{i}}) \in \mathbb{R}^{2}$, while its velocity is given by $(v_{x_{i}},v_{y_{i}}) \in \mathbb{R}^{2}$. The state is denoted by $x_{i} = [p_{x_{i}}, p_{y_{i}}, v_{x_{i}}, v_{y_{i}}]$ and the control input is $u_{i} = [u_{x_{i}}, u_{y_{i}}]$. The state and input bounds are $[\pm 5, \pm 5, \pm 2, \pm 2] $ and $[\pm 0.5, \pm 0.5]$, respectively.

The DSMPC algorithm is applied to \eqref{ex1} to achieve formation control while avoiding an obstacle. The corresponding stage and terminal cost function of the $i$-th vehicle is given by \eqref{stagecost} and \eqref{terminalcost},\ respectively, with $Q = \mathcal{I}_{4}$ and $R = \mathcal{I}_{2}$. The DHCBF and the DCLF are chosen as follows:
\begin{align*}
& h(x_{i}) = (x_{i}-x_{m})^{2} -r^{2}_{m}, \\
& v(x_{i}) = (x_{i}-x^{a}_{j}-d_{ij})^{2},
\end{align*}
where $d_{12} = [0,0.5]$, $d_{13} = [-0.5,0]$, $x_{m}$ and $r_{m}$ denote the position and radius of the obstacle. The parameters of the DHCBF are chosen as $\phi_{1} = 0.1$, $\phi_{2} = 0.9$, $\phi_{3} = 0.4$, where $\phi_{i}$ represents the parameter in \eqref{DSMPCdhcbf} corresponding to the $i$-th vehicle. Additionlly, the parameters $\gamma$ in \eqref{DSMPCcom} and $\lambda$ in \eqref{DSMPCv} are chosen as $0.1$  and $0.9$, respectively. A classic consensus-based control protocol \eqref{nominalu_revised} is employed to compute the estimated states. 
The initial positions, target positions, and the location of the obstacle are illustrated in Fig. \ref{fig4}. 
\begin{table}[t]
    \centering
    \small 
    \renewcommand{\arraystretch}{1} 
    \setlength{\tabcolsep}{3pt} 
    \caption{Different methods benchmark in terms of feasibility, average computation time (act), minimum distance from the obstacle  (min d), maximum communication range (max r), and cost.}  
    \label{tab1} 
    \begin{tabular}{l l c c c c c c}
        \toprule
        controller & status & $T_{p}$ & $\gamma$ & act (s) & min d& max r& cost \\
        \midrule
        \multirow{5}{*}{DSMPC}   & solved & 2 & 0.1 & \textbf{0.289$\pm$0.002} & 0.053 & 1.26 & 177.5 \\
                                 & solved & 3 & 0.1 & 0.355$\pm$0.001 & 0.085 & 1.29 & \textbf{144.3} \\
                                 & solved & 5 & 0.1 & 0.488$\pm$0.002 & 0.046 & 1.29 & 192.9 \\
                                 & solved & 5 & 0.4 & 0.577$\pm$0.002 & 0.245 & 1.31 & 182.9 \\
                                 & solved & 5 & 0.8 & 0.482$\pm$0.003 & \textbf{0.251} & \textbf{1.21} & 171.7 \\
        \midrule
        \multirow{3}{*}{MPC-DC}  & infeas. & 15  & -   & NaN  & NaN  & NaN  & NaN \\
                                 & solved  & 20  & -   & 1.290$\pm$0.011 & 0.000 & 1.362 & 536.3\\
                                 & solved  & 30 & -   & 1.448$\pm$0.008 & 0.000 & 1.873 & 969.2 \\
        \midrule
        \multirow{1}{*}{NC-CBF}  & solved & -  & - & 0.385$\pm$0.002 & 0.000 & 1.323 & 416.7 \\
        \midrule
        \multirow{1}{*}{CLF-CBF} & infeas. & -  & - & NaN  & NaN  & NaN  & NaN \\
        \bottomrule
    \end{tabular}
\end{table}

We compare the DSMPC algorithm with other methods: (1) The MPC-DC from \cite{2023centralized}, which is also a DMPC framework but enforces safety constraints using the Eucidean norm distance; (2) The nominal controller with DHCBF (NC-DHCBF) from \cite{7857061}, which utilizes DHCBFs to enforce safety constraints and formulates the deviation between the real and nominal controllers as the objective function of a QP; (3) The CLF-CBF from \cite{7040372}.

Fig. \ref{fig4} presents the simulation results comparing all methods. In Fig. \ref{fig4} (a), the NC-CBF method exhibits abrupt maneuvers and sharp turns, indicating limited trajectory smoothness. In contrast, Fig. \ref{fig4} (b) shows that the MPC-DC method significantly enhances smoothness. However, due to the lack of proactive obstacle avoidance constraints, the minimum distance for avoiding the obstacle remains small, making the approach less safe. Fig. \ref{fig4} (c) presents the effectiveness of the DSMPC algorithm with $T_{p}=3$ and $\gamma=0.1$, where all three vehicles maintain relatively large and safe obstacle avoidance distances. However, vehicle $1$ is positioned too far from vehicle $1$. Fig. \ref{fig4} (d) improves upon this by increasing $T_{p}$ to $5$, achieving a better balance between stability and safety, resulting in tight formation maintenance. Fig. \ref{fig4} (e) demonstrates the effect of increasing $\gamma$ to $0.8$, which allows vehicles $2$ and $3$ more flexibility in maneuvering. As a result, the formation is well-maintained, and the obstacle avoidance is further improved.

In Table \ref{tab1}, we benchmark the performance of different methods (DSMPC, MPC-DC, NC-CBF, and CLF-CBF). A larger minimum distance from the obstacle indicates improved vehicle safety, while a smaller maximum communication range is preferable, as it signifies that vehicles can maintain a tighter formation. The proposed DSMPC algorithm demonstrates significant advantages by successfully solving all test cases. While the NC-CBF method is also able to solve them, it incurs a considerably higher cost. In contrast, MPC-DC (for $T_{p} \leq 15$) and CLF-CBF fail in certain scenarios. This is primarily because, in MPC-DC, The relative degree of the constraint $h \geq 0$ to the input is $2$, requiring a longer prediction horizon for effective obstacle avoidance. In the case of CLF-CBF, its centralized framework leads to coupling between $u_{i}$ and $u_{j}$ in the CLF with a relative degree $1$, making it challenging to find a feasible solution. This highlights the advantage of our DSMPC algorithm, which succeeds in all test cases thanks to its distributed nature and its ability to handle systems with high relative degrees. Additionally, the DSMPC algorithm achieves significantly lower computation times compared to MPC-DC. With appropriate parameter tuning, the DSMPC algorithm also demonstrates lower control costs than other methods. Increasing $T_{p}$ from $2$ to $5$ strikes a balance between maintaining tight formation and ensuring effective obstacle avoidance. Furthermore, a higher $\gamma$ initially reduces the control cost. Therefore, in practice, we can select $T_{p}$ and $\gamma$ based on specific requirements.

\section{Conclusion}
\label{6}
This paper has presented a DSMPC algorithm that ensures formation control and obstacle avoidance in nonlinear MASs by integrating DHCBFs and DCLFs. The approach maintains a fully distributed structure and guarantees recursive feasibility and stability. Simulation results have demonstrated its improved efficiency over existing methods.
In the future, we will attempt to extend our method to stochastic or uncertain MASs to enhance robustness.

\appendix
%\addtolength{\textheight}{-0cm} 
\subsection{Proof of Theorem 1}
To prove Theorem \ref{them1}, it only needs to show that if, at time $t$, there exists a sequence of control inputs 
\begin{equation} \label{feaut}
\mathbf{u}_{i}^{\ast}(t) = \{u_{i}^{\ast}(0|t), u_{i}^{\ast}(1,t), \ldots, u_{i}^{\ast}(T_{p}-1,t) \}.
\end{equation}
such that \eqref{DSMPC} is feasible for $\forall i \in \mathcal{N}$ and $j \in \mathcal{N}_{i}$, then there exists a sequence of control inputs $\mathbf{u}_{i}(t+1)$ ensuring that it remains feasible at time $t+1$. 

Assume that at time $t$, for $\forall i \in \mathcal{N}$ and $j \in \mathcal{N}_{i}$, \eqref{DSMPC} is feasible. Then, the optimal control input sequence at time $t$ can be represented as $\eqref{feaut}$. It follows from $\eqref{transstate}$ that
\begin{align}
x_{i}^{\ast}(k|t+1) = x_{i}^{\ast}(k+1|t),
\end{align}
where $0 \leq k \leq T_{p}-1$, and
\begin{align}
u_{i}^{\ast}(k|t+1) = u_{i}^{\ast}(k+1|t),
\end{align}
where $0 \leq k \leq T_{p}-2$. If we can find a control input $\bar{u}_{i}(T_{p}-1|t+1)$ such that \eqref{DSMPC} remains feasible at time step $T_{p}-1$ predicted at time step $t+1$, then we can represent
\begin{align} \label{feaut+1}
\mathbf{u}_{i} & (t+1) =  \\
 & \{u_{i}^{\ast}(1|t), u_{i}^{\ast}(2,t), \ldots, u_{i}^{\ast}(T_{p}-1,t), \bar{u}_{i}(T_{p}-1|t+1) \}, \notag 
\end{align}
as a feasible control input sequence of \eqref{DSMPC} at time $t+1$.
Based on \eqref{psim} in Assumption \ref{asp1}, there exists at least one control input $\bar{u}_{i}(T_{p}-1|t+1) = u_{i}^{M} \in \mathcal{U}$ such that
\begin{equation} \label{psimm}
\begin{aligned}
\psi_{m}(x_{i}(T_{p}|t+1)) & = \psi_{m-1}(f(x_{i}(T_{p}-1|t+1),u_{i}^{M})) \\
& - \psi_{m-1}(x_{i}(T_{p}-1|t+1)) \\
& +\alpha(\psi_{m-1}(x_{i}(T_{p}-1|t+1))) \geq 0,
\end{aligned}
\end{equation} 
implying that the constraint \eqref{DSMPCdhcbf} is satisfied at the predictive time $T_{p}-1$ and does not conflict with the constraint \eqref{DSMPCux}. 
Next, we prove that \eqref{DSMPCcom} is satisfied at time $t+1$. From \eqref{feaut+1}, we have
\begin{equation} \label{xtrans}
\begin{aligned}
x_{i}(k|t+1) & = f(x_{i}^{\ast}(k-1,t+1),u_{i}(k-1,t+1)) \\
& = f(x_{i}^{\ast}(k,t),u_{i}^{\ast}(k,t)) \\  
& = x_{i}^{\ast}(k+1|t),
\end{aligned}
\end{equation}
where $k = 0,1,\ldots,T_{p}-1$. By \eqref{estimatedu_revised}, \eqref{transstate} and \eqref{feaut+1} with $x_{i}^{a}(0|t+1)=x_{i}^{*}(1|t)$, there holds:
\begin{equation}\label{xa*}
x_{i}^{a}(k|t+1) = x_{i}^{\ast}(k+1|t),
\end{equation}
where $k = 0,1,\ldots,T_{p}-1$. Thus, based on \eqref{xtrans} and \eqref{xa*}, it can be found that the left side of constraints \eqref{DSMPCcom} equals zero at time step $t+1$, i.e.,
\begin{equation}
\begin{aligned}
\|\tilde{x}_{ii}(k|t+1)\|_{Q} & = \|x_{i}(k|t+1) - x_{i}^{a}(k|t+1)\|_{Q} \\
                                      & = 0.
\end{aligned}
\end{equation}
Meanwhile, as $\delta T_{p} >\delta$, according to the compatibility constraints \eqref{compatibility}, $\eta(t+1) \geq 0$ is always guaranteed. Therefore, the constraint \eqref{DSMPCcom} are satisfied at time step $t+1$, i.e.,
\begin{equation}
 \tilde{x}_{ii}(k|t+1)\|_{Q} \leq \eta(t+1).
\end{equation}
As $v$ is a DCLF, based on \eqref{DCLF}, there exists a $\rho \geq 0$ satisfying the terminal constraints \eqref{DSMPCv} if $\rho = 0$ or \eqref{vrho} if $\rho > 0$. Thus, \eqref{feaut+1} is a feasible solution to \eqref{DSMPC} at time step $t+1$. Since the optimization problem \eqref{DSMPC} is initially feasible, by the principle of recurrence, \eqref{DSMPC} will remain feasible at all future time steps.

\subsection{Proof of Theorem 2}
From \eqref{DCLF} and \eqref{DSMPCv}, for $\forall t \in \mathbb{N}$, there is 
\begin{equation} \label{tildev}
v(\tilde{y}_{ij}(T_{p}|t+1)) \leq \lambda v(\tilde{y}_{ij}(T_{p}|t)),
\end{equation}
where $i \in \mathcal{N}$ and $j \in \mathcal{N}_{i}$. By recursively applying \eqref{tildev}, we obtain:
\begin{equation}
v(\tilde{y}_{ij}(T_{p}|t+1)) \leq \lambda^{t+1} v(\tilde{y}_{ij}(0)) = \tilde{\epsilon}, t \gg 0,
\end{equation} 
where $\tilde{\epsilon}$ is a sufficiently small positive constant. Based on \eqref{vv}, there exists
\begin{equation} \label{ggt}
\|\tilde{y}_{ij}(t)\| \leq \tilde{\epsilon}, t \gg 0.
\end{equation}
Considering the compatibility constraint \eqref{DSMPCcom} and the triangle inequality property, we have
\begin{align}
\|y_{ij}(k|t)\|&= \|x_{i}(k|t)-x_{j}(k|t)-d_{ij}\| \notag \\
& = \|x_{i}^{a}(k|t)-x_{j}(k|t)-d_{ij}+x_{i}(k|t)-x_{i}^{a}(k|t)\| \notag \\
& \leq \|x_{i}^{a}(k|t)-x_{j}(k|t)-d_{ij}\|  \notag \\
& \quad + \|x_{i}(k|t)-x_{i}^{a}(k|t)\| \notag \\
& \leq \|\tilde{y}_{ij}(k|t)\| + \eta_{i}(t). \label{The4 21}
\end{align}
Then, based on \eqref{ggt}, for $t \gg 0$, there is 
\begin{equation}\label{The4 22}
\begin{aligned}
\|y_{ij}(k|t)\| & \leq \|\tilde{y}_{ij}(k|t)\| + \eta_{i}(t) \\
& \leq \tilde{\epsilon} + \eta_{i}(t) \\
& \leq \tilde{\epsilon} + \frac{\gamma \sum_{j \in \mathcal{N}_{i}} (\tilde{\epsilon} + \eta_{i}(t-1))}{(\delta T_{p}-\delta)\sum_{i \in \mathcal{N}_{i}} \zeta_{ij}(t)} \\
& = \tilde{\epsilon} + c_{1} \gamma \tilde{\epsilon} + c_{1} \gamma \eta_{i}(t-1),
\end{aligned}
\end{equation}
where $c_{1} = N_{i}/ (\delta T_{p}-\delta)\sum_{i \in \mathcal{N}_{i}} \zeta_{ij}(t) $, $N_{i}$ is the number of neighbours of agent $i$.  Based on \eqref{compatibility} and \eqref{The4 21}, it can be obtained that
\begin{equation}
\begin{aligned}
\eta_{i}(t-1) & \leq  \frac{\gamma \sum_{j \in \mathcal{N}_{i}} (\tilde{\epsilon} + \eta_{i}(t-2))}{(\delta T_{p}-\delta)\sum_{i \in \mathcal{N}_{i}} \zeta_{ij}(t-1)} \\
& =  c_{2} \gamma \tilde{\epsilon} + c_{2} \gamma  \eta_{i}(t-2),
\end{aligned}
\end{equation}
where $c_{2} = N_{i}/ (\delta T_{p}-\delta)\sum_{i \in \mathcal{N}_{i}} \zeta_{ij}(t-1)$. Thus, according to the recursion principle, we have 
\begin{equation}
\begin{aligned}
\|y_{ij}(k|t)\|& \leq  \tilde{\epsilon} (1+\sum_{q=1}^{t} (\gamma^{q} \prod_{p=1}^{q} c_{p})) + \gamma^{t} \prod_{p=1}^{t} c_{p} \eta_{i}(0),
\end{aligned}
\end{equation}
where $c_{p} = N_{i}/ (\delta T_{p}-\delta)\sum_{j \in \mathcal{N}_{i}} \zeta_{ij}(t-p+1) $, $\prod$ denotes the product of terms in tandem. Define the maximum value of $\prod_{p=1}^{q} c_{p}$ as $c_{\text{max}}$. We have
\begin{equation}
\begin{aligned}
 & \lim_{t \to \infty} (\tilde{\epsilon} (1+\sum_{q=1}^{t} (\gamma^{q} \prod_{p=1}^{q} c_{p})) + \gamma^{t} \prod_{p=1}^{t} c_{p} \eta_{i}(0)) \\
 & \leq \tilde{\epsilon}(1+\frac{\gamma c_{\text{max}}}{1- \gamma}).
 \end{aligned}
\end{equation}
As $\tilde{\epsilon}$ is small enough, yielding 
\begin{equation}
 \lim_{t\to\infty} \|y_{ij}(k|t)\| \leq \tilde{\epsilon}(1+\frac{\gamma c_{\text{max}}}{1- \gamma}) \leq \epsilon.
\end{equation}
This implies that MAS \eqref{MAS} will asymptotically achieve formation. Additionally, according to Theorem \ref{them1}, constraints \eqref{DSMPCux}, \eqref{DSMPCdhcbf}, and \eqref{DSMPCcom} will always be satisfied. Thus, the formation control and obstacle avoidance can be achieved with controllers computed by Algorithm \ref{alg1}.

\addtolength{\textheight}{-17.7cm} 

\bibliographystyle{IEEEtran}
\bibliography{ref}

\end{document}